%
\documentclass[runningheads]{llncs}
\usepackage[T1]{fontenc}
%
\usepackage{graphicx}
%
\usepackage{amsmath}
\usepackage{amssymb}
\usepackage{thmtools}
\usepackage{hyperref}
\usepackage{cleveref}
\usepackage{color}
\usepackage{subcaption}

\usepackage[ruled,linesnumbered,vlined]{algorithm2e}
\SetAlgorithmName{Algorithm}{Algorithm}{List of algorithms}

\newcommand{\OO}[1]{O( {#1} ) }

\newcommand{\ycert}{\texttt{YES}-certificate}
\newcommand{\ncert}{\texttt{NO}-certificate}
\newcommand{\yncert}{\texttt{YES}- and \ncert s}
\newcommand{\sort}{\textsc{sort}}
\newcommand{\scan}{\textsc{scan}}
\newcommand{\LCA}{\text{LCA}_T}

\begin{document}
\title{Certifying Induced Subgraphs in Large Graphs}
%
%
\author{Ulrich Meyer\inst{1} \and Hung Tran\inst{1} \and Konstantinos Tsakalidis\inst{2}}
%
\authorrunning{U. Meyer et al.}
%
\institute{Goethe University Frankfurt, Germany \\ \email{\{umeyer,htran\}@ae.cs.uni-frankfurt.de} \and	University of Liverpool, United Kingdom \\ \email{K.Tsakalidis@liverpool.ac.uk} }
%
\maketitle              
\begin{abstract}
We introduce I/O-optimal certifying algorithms for bipartite graphs, as well as for the classes of split, threshold, bipartite chain, and trivially perfect graphs. When the input graph is a class member, the certifying algorithm returns a certificate that characterizes this class. Otherwise, it returns a forbidden induced subgraph as a certificate for non-membership. On a graph with $n$ vertices and $m$ edges, our algorithms take optimal $\OO{\sort(n + m)}$ I/Os in the worst case or with high probability for bipartite chain graphs, and the certificates are returned in optimal I/Os. We give implementations for split and threshold graphs and provide an experimental evaluation.
\keywords{certifying algorithm \and graph algorithm \and external memory}
\end{abstract}

\section{Introduction}

\emph{Certifying algorithms} \cite{MMNS11} ensure the correctness of an algorithm's output without having to trust the algorithm itself. The user of a certifying algorithm inputs~$x$ and receives the output $y$ with a \emph{certificate} or \emph{witness} $w$ that proves that~$y$ is a correct output for input $x$. Certifying the bipartiteness of a graph is a textbook example where the returned witness $w$ is a bipartition of the vertices (\ycert) or an induced \emph{odd-length cycle} subgraph, i.e. a cycle of vertices with an odd number of edges (\ncert). 

Emerging big data applications need to process large graphs efficiently. Standard models of computation in internal memory (RAM, pointer machine) do not capture the 
algorithmic complexity of processing graphs with size that exceed the main memory. 
The \emph{I/O model} by Aggarwal and Vitter~\cite{AV88} is suitable for studying large graphs stored in an external memory hierarchies, e.g. comprised of cache, RAM and hard disk memories. 
The input data elements are stored in \emph{external memory} (EM) packed in \emph{blocks} of at most $B$ elements and computation is free in \emph{main memory} for at most $M$ elements. The \emph{I/O-complexity} is measured in \emph{I/O-operations} (\emph{I/Os}) that transfer a block from external to main memory and vice versa. 
\emph{I/O-optimal} external memory algorithms for sorting and scanning $n$ elements take \sort$\left(n\right) = \OO{(n/B)\log_{M/B}(n/B)}$ I/Os and \scan$\left(n\right) = \OO{n/B}$ I/Os, respectively.

\subsection{Previous Work}

Certifying bipartiteness in internal memory takes time linear in the number of edges by any traversal of the graph. 
However, all known external memory breadth-first search~\cite{AM09} and depth-first search~\cite{BGVW00} traversal algorithms take suboptimal $\omega\left(\text{\sort}\left(n+m\right)\right)$ I/Os for an input graph with $n$ vertices and $m$ edges. 
Heggernes and Kratsch~\cite{HK07} present optimal internal memory algorithms for certifying whether a graph belongs to the classes of split, threshold, bipartite chain, and trivially perfect graphs. 
They return in linear time a \ycert{}~characterizing the corresponding class or a forbidden induced subgraph of the class (\ncert{}).
The \yncert{} are authenticated in linear and constant time, respectively.
A straightforward application to the I/O model leads to suboptimal certifying algorithms since graph traversal algorithms in external memory are much more involved and no worst-case efficient algorithms are known.


\subsection{Our Results} 
We present I/O-optimal certifying algorithms for \emph{bipartite}, \emph{split}, \emph{threshold}, \emph{bipartite chain}, and \emph{trivially perfect} graphs.
All algorithms return in the membership case, a \ycert\ $w$ characterizing the graph class, or a $\OO{1}$-size \ncert\ in the non-membership case. As a byproduct, we show how to efficiently certify graph \emph{bipartiteness} in external memory using standard I/O-efficient techniques.
Additionally, we perform experiments for split and threshold graphs showing scaling beyond the size of main memory.

\section{Preliminaries and Notation}
\label{sec:preliminaries}

For a graph $G = (V, E)$, let $n = |V|$ and $m = |E|$ denote the number of vertices $V$ and edges $E$, respectively.
For a vertex $v \in V$ we denote by $N(v)$ the \emph{neighborhood} of $v$ and by $N[v] = N(v) \cup \{v\}$ the \emph{closed neighborhood} of $v$.
The \emph{degree} $\deg(v)$ of a vertex $v$ is given by $\deg(v) = |N(v)|$.
A vertex is called \emph{simplicial} if $N(v)$ is a clique and \emph{universal} if $N[v] = V$.

\subsubsection{Graph Substructures and Orderings}
The subgraph of $G$ that is induced by a subset $A \subseteq V$ of vertices is denoted by $G[A]$.
The \emph{substructure} (subgraph) of a cycle on $k$ vertices is denoted by $C_k$ and of a path on $k$ vertices is denoted by $P_k$.
The substructure $2K_2$ is a graph that is isomorphic to the following constant size graph: $(\{a,b,c,d\}, \{ab, cd\})$.

Henceforth we refer to different types of orderings of vertices: an ordering $(v_1, \ldots, v_n)$ is a (i) \emph{perfect elimination ordering} (\emph{peo}) if $v_i$ is simplicial in $G[\{ v_i, v_{i+1}, \ldots, v_n \}]$ for $i \in \{1, \ldots, n\}$, and a (ii) \emph{universal-in-a-component-ordering} (\emph{uco}) if $v_i$ is universal in its connected component in $G[ \{v_i, v_{i+1}, \ldots, v_n\} ]$ for $i \in \{1, \ldots, n\}$.
For a subset $X = \{v_1, \ldots, v_k\}$, we call $(v_1, \ldots, v_k)$ a \emph{nested neighborhood ordering} (\emph{nno}) if $(N(v_1) \setminus X) \subseteq (N(v_2) \setminus X)) \subseteq \ldots \subseteq (N(v_k) \setminus X)$.
Finally for any ordering, we partition $N(v_i)$ into lower and higher ranked neighbors, respectively, $L(v_i) = \{ x \in N(v_i) : v_i \text{ is ranked lower than } x \}$ and $H(v_i) = \{ x \in N(v_i) : v_i \text{ is ranked higher than } x \}$.

\subsubsection{Graph Representation}
We assume an \emph{adjacency row representation} where the graph $G =(V,E)$ is represented by two arrays $P = [\ P_i\ ]_{i=1}^n$ and $E = [\ u_i\ ]_{i=1}^m$.
The neighbors of a vertex $v_i$ are then given by the vertices at position $P[v_i]$ to $P[v_{i + 1}] - 1$ in $E$.
We use the adjacency row representation to allow for efficient scanning of $G$: (i) computing $k$ consecutive adjacency lists consisting of $m$ edges requires $\OO{\scan(m)}$ I/Os and (ii) computing the degrees of $k$ consecutive vertices requires $\OO{\scan(k)}$ I/Os.

\subsubsection{Time-Forward Processing} 
\emph{Time-forward processing} (\emph{TFP}) is a generic technique to manage data dependencies of external memory algorithms~\cite{MZ02}.
These dependencies are typically modeled by a directed acyclic graph $G = (V, E)$ where every vertex $v_i \in V$ models the computation of $z_i$ and an edge $(v_i, v_j) \in E$ indicates that $z_i$ is required for the computation of $z_j$.

Computing a solution then requires the algorithm to traverse $G$ according to some topological order $\prec_T$ of the vertices $V$.
The TFP technique achieves this in the following way: after $z_i$ has been calculated, the algorithm inserts a message $\langle v_j, z_i \rangle$ into a minimum priority-queue data structure for every successor $(v_i, v_j) \in E$ where the items are sorted by the recipients according to $\prec_T$.
By construction, $v_j$ receives all required values $z_i$ of its predecessors $v_i \prec_T v_j$ as messages in the data structure.
Since these predecessors already removed their messages from the data structure, items addressed to $v_j$ are currently the smallest elements in the data structures and thus can be dequeued with a delete-minimum operation. By using suitable external memory priority-queues~\cite{A03}, TFP incurs $\OO{\sort(k)}$ I/Os, where $k$ is the number of messages sent.

\section{Certifying Graphs Classes in External Memory}
\label{sec:small-subgraph-detection}

\subsection{Certifying Split Graphs in External Memory}
A split graph is a graph that can be partitioned into two sets of vertices $(K, I)$ where $K$ and $I$ induce a clique and an independent set, respectively.
The partition $(K, I)$ is called the \emph{split partition}.
They are additionally characterized by the forbidden induced substructures $2K_2, C_4$ and $C_5$, meaning that any vertex subset of a split graph cannot induce these structures \cite{HF77}.
Since split graphs are a subclass of chordal graphs, there exists a peo of the vertices for every split graph. 
In fact, any non-decreasing degree ordering of a split graph is a peo~\cite{HK07}.

Our algorithm adapts the internal memory certifying algorithm of Heggernes and Kratsch~\cite{HK07} to external memory by adopting TFP.
As output it either returns the split partition $(K, I)$ as a \ycert{} or one of the forbidden substructures $C_4, C_5$ or $2K_2$ as a \ncert.

First, we compute a non-decreasing degree ordering $\alpha = (v_1, \ldots, v_n)$ and relabel\footnote{If a vertex $v_i$ has rank $k$ in $\alpha$ it will be relabeled to $v_k$.} the graph according to $\alpha$.
Thereafter it checks whether $\alpha$ is a peo in $\OO{\sort(n + m)}$ I/Os by \autoref{prop:peo}.
In the non-membership case, the algorithm returns three vertices $v_j, v_k, v_i$ where $\{v_i, v_j\}, \{v_i, v_k\} \in E$ but $\{v_j, v_k\} \notin E$ and $i < j < k$, violating that $v_i$ is simplicial in $G[\{v_i, \ldots, v_n\}]$.
In order to return any of the forbidden substructures we find additional vertices that complete the induced subgraphs.
Note that $(v_k, v_i, v_j)$ already forms a $P_3$ and may extend to a $C_4$ if $N(v_k) \cap N(v_j)$ contains a vertex $z \ne v_i$ that is not adjacent to $v_i$.
Computing $(N(v_k) \cap N(v_j)) \setminus N(v_i)$ requires scanning the adjacencies of $\OO{1}$ many vertices totaling to $\OO{\scan(n)}$ I/Os.
If $(N(v_k) \cap N(v_j)) \setminus N(v_i)$ is empty we try to extend the $P_3$ to a $C_5$ or output a $2K_2$ otherwise.
To do so, we find vertices $x \ne v_i$ and $y \ne v_i$
for which $\{x, v_j\}, \{y, v_k\}\in E$ but $\{x, v_k\}, \{y, v_j\} \notin E$ that are also not adjacent to $v_i$, i.e.~$\{x, v_i\}, \{y, v_i\} \notin E$.
Both~$x$ and $y$ exist due to the ordering $\alpha$ \cite{HK07} and are found using $\OO{1}$ scanning steps requiring $\OO{\scan(n}$ I/Os.
If $\{x, y\} \in E$ then $(v_j, v_i, v_k, y, x)$ is a $C_5$, otherwise $G[\{v_j, x, v_k, y\}]$ constitutes a $2K_2$.
Determining whether $\{x, y\} \in E$ requires scanning $N(x)$ and $N(y)$ using $\OO{\scan(n)}$ I/Os.

In the membership case, $\alpha$ is a peo and the algorithm proceeds to verify first the clique $K$ and then the independent set $I$ of the split partition $(K, I)$.
Note that for a split graph the maximum clique of size $k$ must consist of the $k$-highest ranked vertices in $\alpha$ \cite{HK07} where $k$ can be computed using $\OO{\sort(m)}$ I/Os by \autoref{prop:maximumclique}.
Therefore, it suffices to verify for each of the $k$ candidates $v_i$ whether it is connected to $\{v_{i+1}, \ldots, v_n\}$ since the graph is undirected.
For a sorted sequence of edges relabeled by $\alpha$, we check this property using $\OO{\scan(m)}$ I/Os.
If we find a vertex $v_i \in \{ v_{n - k + 1}, \ldots, v_n\}$ where $\{v_i, v_j\} \notin E$ with $i < j$ then $G[\{v_i, \ldots, v_n\}]$ already does not constitute a clique and we have to return a \ncert.
Since the maximum clique has size $k$, there are $k$ vertices with degree at least $k - 1$.
By these degree constraints there must exist an edge $\{v_i, x\} \in E$ where $x \in \{v_1, \ldots, v_{i-1}\}$ \cite{HK07}.
Additionally, it holds that $\{x, v_j\} \notin E$ and there exists an edge $\{z, v_j\} \in E$ where $z \in \{v_1, \ldots, v_{i-x1}\}$ that cannot be connected to $x$, i.e.~$\{x, z\} \notin E$ \cite{HK07}.
Thus, we first scan the adjacency lists of $v_i$ and $v_j$ to find $x$ and $z$ in $\OO{\scan(n)}$ I/Os and return $G[\{v_i, v_j, x, z\}]$ as the $2K_2$ \ncert.
Otherwise let $K = \{ v_{n-k+1}, \ldots, v_n\}$.

Lastly, the algorithm verifies whether the remaining vertices form an independent set. 
We verify that each candidate $v_i$ is not connected to $\{v_{i+1}, \ldots, v_{n-k}\}$, since the graph is undirected.
For this, it suffices to scan over $n - k$ consecutive adjacency lists in $\OO{\scan(m)}$ I/Os.
More precisely, we scan the adjacency lists from $v_{n-k}$ to $v_1$ and in case an edge $\{ v_i, v_j\}$ where $i < j \le n - k$ is found we find two more vertices to again complete a $2K_2$.
For the first occurrence of such a vertex $v_i$, we remark that $\{v_{i + 1}, \ldots, v_{n-k}\}$ and $\{v_{n-k+1}, \ldots, v_n\}$ form an independent set and a clique, respectively.
Therefore there exists a vertex $y \in K$ that is adjacent to $x$ but not to $v_i$~\cite{HK07}.
We find $y$ by scanning $N(x)$ and $N(v_i)$ in $\OO{\scan(n)}$ I/Os.
To complete the $2K_2$ we similarly find $z \in N(y) \setminus (N(x) \cup N(y_i))$ in $\OO{\scan(n)}$ I/Os which is guaranteed to exist~\cite{HK07}.



\begin{algorithm}[t]
	\DontPrintSemicolon
	\caption{Recognizing Perfect Elimination in EM}
	\label{algo:peo}
	\KwData{edges $E$ of graph $G$, non-decreasing degree ordering $\alpha = (v_1, \ldots, v_n)$}
	\KwOut{bool whether $\alpha$ is a peo, three invalidating vertices $\{v_i, v_j, v_k\}$ if not}
	Sort $E$ and relabel according to $\alpha$ \\
	\For{$i = 1, \ldots, n$}{
		Retrieve $H(v_i)$ from $E$\\
		\If{$H(v_i) \ne \emptyset$}{
			Let $u$ be the smallest successor of $v_i$ in $H(v_i)$ \\
			\For{$x \in H(v_i) \setminus \{u\}$}{
				\textsf{PQ.push}($\langle u, x, v_i \rangle$) \tcp*{inform $u$ of $x$ coming from $v_i$}
			}
		}
		\While(\hfill \CommentSty{// for each message to $v_i$}){$\langle v, v_k, v_j \rangle \leftarrow$ {\normalfont \textsf{PQ.top}()} where $v = v_i$}{
			\If(\hfill \CommentSty{// $v_i$ does not fulfill peo property}){$v_k \notin H(v_i)$}{
				\Return \textsc{false}, $\{v_i, v_j, v_k\}$
			}
			\textsf{PQ.pop}()
		}
	}
	\Return \textsc{true}
\end{algorithm}

\begin{proposition}
	\label{prop:peo}
	Verifying that a non-decreasing degree ordering $\alpha = (v_1, \ldots, v_n)$ of a graph $G$ with $n$ vertices and $m$ edges is a perfect elimination ordering requires $\OO{\sort(n+m)}$ I/Os.
\end{proposition}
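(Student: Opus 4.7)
The plan is to show that Algorithm~\ref{algo:peo} correctly decides whether $\alpha$ is a peo and runs within $\OO{\sort(n+m)}$ I/Os.

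For correctness, $\alpha$ is a peo iff every $H(v_i)$ is a clique. A classical argument of Rose, Tarjan and Lueker shows that it suffices to check the weaker condition that, whenever $H(v_i)\neq\emptyset$, its smallest-ranked element $u$ satisfies $H(v_i)\setminus\{u\}\subseteq H(u)$. I would prove this by reverse induction on $i$: assuming $H(v_{i+1}),\ldots,H(v_n)$ are cliques, take $x,y\in H(v_i)$ with $x$ lower-ranked than $y$; if $x=u$, the check at $v_i$ yields $xy\in E$ directly, otherwise both $x,y$ lie in $H(u)$ and the inductive hypothesis applied to $H(u)$ gives $xy\in E$. Algorithm~\ref{algo:peo} realises exactly this condition via TFP: $v_i$ forwards to $u$ a message for every $x\in H(v_i)\setminus\{u\}$, and $u$ later checks that each announced $v_k$ lies in $H(u)$. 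A failed check returns the triple $(v_i,v_j,v_k)$ for which $v_jv_i,v_jv_k\in E$ but $v_iv_k\notin E$, i.e.\ a $P_3$ centered at $v_j$ that certifies the violation.

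For the I/O bound, sorting and relabelling $E$ costs $\OO{\sort(m)}$. The main loop retrieves each $H(v_i)$ by a single left-to-right pass over the relabelled $E$, totalling $\OO{\scan(n+m)}$ I/Os by the adjacency-row representation. Each $v_i$ pushes at most $|H(v_i)|-1$ messages, so the total number of TFP messages is at most $m$, and all priority-queue operations therefore cost $\OO{\sort(m)}$. To carry out the inclusion tests $v_k\in H(v_i)$ without extra I/Os, I would encode the PQ key lexicographically so that messages addressed to $v_i$ surface sorted by their secondary field $v_k$; then, at the iteration of $v_i$, the incoming messages form a sorted stream that is merge-matched in one pass against the already-sorted list $H(v_i)$, piggy-backed on the single scan of $E$. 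These merges together contribute $\OO{\scan(m)}$, and summing all components yields $\OO{\sort(n+m)}$.

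The step I expect to be most delicate is precisely this membership test: a naive per-message lookup into $E$ would incur $\Theta(1)$ I/O per message and destroy the sorting bound. Aligning the PQ's secondary ordering with the sorted order of the relabelled edge array is the key trick that collapses all tests to a single joint scan. A minor subtlety is retaining the original sender $v_j$ inside each message so that the offending triple can be reported as the $P_3$ \ncert{} when the check fails.
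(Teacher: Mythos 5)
Your proposal is correct and follows essentially the same approach as the paper: the same TFP message scheme (each $v_i$ sends $H(v_i)\setminus\{u\}$ to its lowest-ranked higher neighbor $u$, which then verifies membership in $H(u)$), the same correctness reduction to the classical Rose--Tarjan--Lueker/Golumbic criterion that the paper cites, and the same $\OO{\sort(m)}$ accounting of messages plus a single ordered scan of the relabelled edge array for the membership tests. The only difference is that you spell out the reverse-induction correctness argument and the lexicographic keying of the priority queue, both of which the paper leaves implicit.
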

\begin{proof}
	We follow the approach of~\cite[Theorem 4.5]{G04} and adapt it to the external memory using TFP, see \autoref{algo:peo}.
	
	After relabeling and sorting the edges by $\alpha$ we iterate over the vertices in the order given by $\alpha$.
	For a vertex $v_i$ the set of neighbors $N(v_i)$ needs to be a clique.
	In order to verify this for all vertices, for a vertex $v_i$ we first retrieve $H(v_i)$.
	Then let $u \in H(v_i)$ be the smallest ranked neighbor according to $\alpha$.
	In order for $v_i$ to be simplicial, $u$ needs to be adjacent to all vertices of $H(v_i) \setminus \{u\}$.
	In TFP-fashion we insert a message $\langle u, w \rangle$ into a priority-queue where $w \in H(v_i) \setminus \{u\}$ to inform $u$ of every vertex it should be adjacent to.
	Conversely, after sending all adjacency information, we retrieve for $v_i$ all messages $\langle v_i, - \rangle$ directed to $v_i$ and check that all received vertices are indeed neighbors of $v_i$.
	
	Relabeling and sorting the edges takes $\OO{\sort(m)}$ I/Os.
	Every vertex $v_i$ inserts at most all its neighbors into the priority-queue totaling up to $\OO{m}$ messages which requires $\OO{\sort(m)}$ I/Os.
	Checking that all received vertices are neighbors only requires a scan over all edges since vertices are handled in non-descending order by $\alpha$. \hfill \qed
\end{proof}

\begin{algorithm}[t]
	\DontPrintSemicolon
	\caption{Maximum Clique Size for Chordal Graphs in EM}
	\label{algo:maximumclique}
	\KwData{edges $E$ of input graph $G$, peo $\alpha = (v_1, \ldots, v_n)$}
	\KwOut{maximum clique size $\chi$}
	Sort $E$ and relabel according to $\alpha$ \\
	$\chi \leftarrow 0$ \\
	\For{$i = 1, \ldots, n$}{
		Retrieve $H(v_i)$ from $E$ \tcp*{scan E}
		\If{$H(v_i) \ne \emptyset$}{
			Let $u$ be the smallest successor of $v_i$ in $H(v_i)$ \\
			\textsf{PQ.push}($\langle u, |H(v_i)| - 1 \rangle$) \tcp*{$v_i$ simplicial $\Rightarrow$ $G[N(v_i)]$ is clique}
		}
		$S(v_i) \leftarrow -\infty$ \\
		\While{$\langle v, S \rangle \leftarrow$ {\normalfont \textsf{PQ.top}()} where $v = v_i$}{
			$S(v_i) \leftarrow \max \{ S(v_i), S \}$ \tcp*{compute maximum over all}
			\textsf{PQ.pop}()
		}
		$\chi \leftarrow \max\{ \chi, S(v_i) \}$
	}
	\Return{$\chi$}
\end{algorithm}

\begin{proposition}
	\label{prop:maximumclique}
	Computing the size of a maximum clique in a split graph requires $\OO{\sort(m)}$ I/Os.
\end{proposition}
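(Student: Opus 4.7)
The plan rests on a standard structural fact about chordal graphs: if $\alpha = (v_1, \ldots, v_n)$ is a peo of $G$, then for every $v_i$ the set $\{v_i\} \cup H(v_i)$ induces a clique, because $v_i$ is simplicial in $G[\{v_i, \ldots, v_n\}]$ and hence its later neighbors are pairwise adjacent. Moreover, if $K$ is any maximum clique and $v_j$ its earliest vertex in $\alpha$, then $K \setminus \{v_j\} \subseteq H(v_j)$. Consequently the maximum clique size equals $1 + \max_i |H(v_i)|$. Since split graphs are chordal and the non-decreasing degree ordering used throughout this section is a peo, it suffices to compute $|H(v_i)|$ for every $v_i$ and return the maximum.

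For the I/O analysis of \autoref{algo:maximumclique}, I would first relabel and sort the edges by $\alpha$ in $\OO{\sort(m)}$ I/Os. A single pass over the sorted edge array then yields $H(v_i)$ for each consecutive $v_i$, costing $\OO{\scan(m)}$ I/Os across the entire outer loop. Each $v_i$ pushes at most one message to the priority queue, forwarding the local clique-size contribution to its smallest successor $u$; this amounts to $\OO{n}$ messages and hence $\OO{\sort(n)}$ I/Os for all insertions and extractions by the external-memory priority queue of~\cite{A03}. Summing yields $\OO{\sort(n + m)} = \OO{\sort(m)}$ I/Os, provided isolated vertices are discarded beforehand, since they cannot belong to any clique of size at least $2$.

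The only mildly delicate step is verifying that the TFP bookkeeping actually reproduces the global maximum $1 + \max_i |H(v_i)|$: each candidate maximum clique is witnessed when its earliest vertex $v_j$ transmits its clique-size contribution to its smallest successor, where it is then aggregated into $S$ and compared against $\chi$. A straightforward forward induction over $\alpha$ shows that the returned $\chi$ equals the maximum clique size. Beyond this bookkeeping argument no deeper obstacle arises; the I/O bound follows directly from combining I/O-efficient sorting of the edge array with a single TFP pass.
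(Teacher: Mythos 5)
Your proposal is correct and follows essentially the same route as the paper: both exploit that the non-decreasing degree ordering is a peo of the (chordal) split graph, so each $\{v_i\} \cup H(v_i)$ is a clique, and both obtain the size via one sort/relabel pass plus a TFP-style forward pass in $\OO{\sort(m)}$ I/Os. The only difference is presentational: the paper defers correctness to Golumbic's linear-time algorithm, whereas you state the underlying fact (maximum clique size $= 1 + \max_i |H(v_i)|$) explicitly --- which, as a side effect, shows the priority queue is not even needed when only the \emph{size} is required.
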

\begin{proof}
	Note that split graphs are both chordal and co-chordal \cite{HF77}.
	For chordal graphs, computing the size of a maximum clique in internal memory takes linear time \cite[Theorem 4.17]{G04} and is easily convertible to an external memory algorithm using $\OO{\sort(m)}$ I/Os.
	To do so, we simulate the data accesses of the internal memory variant using priority-queues to employ TFP, see \autoref{algo:maximumclique}.
	Instead of updating each $S(v_i)$ value immediately, we delay its consecutive computation by sending a message $\langle v_i, S \rangle$ to $v_i$ to inform $v_i$, that $v_i$ is part of a clique of size $S$.
	After collecting all messages, the overall maximum is computed and the global value of the currently maximum clique is updated if necessary. \hfill \qed
\end{proof}

By the above description it follows that split graphs can be certified using $\OO{\sort(n + m)}$ I/Os which we summarize in \autoref{lem:split}.

\begin{lemma}
	\label{lem:split}
	A graph with $n$ vertices and $m$ edges stored in external memory is certified whether it is a split graph or not in $\OO{\sort(n + m)}$ I/Os.
	In the membership case the algorithm returns in $\OO{\scan(K + I)}$ I/Os the split partition $(K, I)$ as the \ycert, and otherwise it returns a $\OO{1}$-size \ncert.
\end{lemma}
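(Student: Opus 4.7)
The plan is to simply aggregate the I/O costs of the pipeline described above and to check that every branch terminates with either a valid partition or a constant-size forbidden substructure within the stated bound.

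First, I would compute the non-decreasing degree ordering $\alpha = (v_1,\ldots,v_n)$ by sorting the vertices on their degrees, which can be read off from the boundaries $P_i$ of the adjacency row representation in $\OO{\scan(n)}$ I/Os and then sorted in $\OO{\sort(n)}$ I/Os. Relabelling the edge array according to $\alpha$ requires a constant number of sorts and scans of $E$, and thus $\OO{\sort(n+m)}$ I/Os.

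Next, I would run the peo verification of \autoref{prop:peo} on the relabelled instance, costing $\OO{\sort(n+m)}$ I/Os. If this step rejects, it returns a witness triple $(v_i,v_j,v_k)$ forming a $P_3$, and the certificate-completion argument in the text above shows that a $C_4$, $C_5$, or $2K_2$ can be extracted by scanning the adjacency lists of $\OO{1}$ vertices, i.e.\ in $\OO{\scan(n)} \subseteq \OO{\sort(n+m)}$ I/Os. If the peo test accepts, I would invoke \autoref{prop:maximumclique} to obtain the maximum clique size $k$ in $\OO{\sort(m)}$ I/Os, and then verify the two halves of the candidate partition $(K,I)$ with $K=\{v_{n-k+1},\ldots,v_n\}$ and $I=\{v_1,\ldots,v_{n-k}\}$: the clique property is checked by one scan of the sorted edge array, and the independent-set property by scanning the adjacency lists of the lower-ranked vertices, each taking $\OO{\scan(m)}$ I/Os. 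Any violation in either half again yields, via $\OO{1}$ additional list scans, a $2K_2$ \ncert\ in $\OO{\scan(n)}$ I/Os, as argued in the text.

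Summing these contributions gives the claimed $\OO{\sort(n+m)}$ I/O bound; in the \ycert\ case the partition $(K,I)$ is already a contiguous prefix/suffix of the relabelled vertex array and can be output in $\OO{\scan(|K|+|I|)}$ I/Os, and in the \ncert\ case the returned subgraph has constant size. The only delicate point is that every failure branch must actually complete its triple to a genuine induced $C_4$, $C_5$, or $2K_2$; I would defer this structural justification to~\cite{HK07}, since our contribution is only to show that the extra vertices demanded by their existence arguments are locatable within $\OO{1}$ scans of adjacency lists, which is the main obstacle to matching the internal memory cost. \qed
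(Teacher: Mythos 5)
Your proposal follows the paper's argument exactly: the paper proves \autoref{lem:split} by the same pipeline (degree ordering and relabelling, the peo test of \autoref{prop:peo}, constant-scan completion of the $P_3$ to a $C_4$, $C_5$, or $2K_2$, the maximum clique size from \autoref{prop:maximumclique}, and separate scans verifying the clique and independent-set halves with $2K_2$ extraction on failure), deferring the structural existence arguments to~\cite{HK07} just as you do. The cost aggregation and the observation that $(K,I)$ is output as a contiguous prefix/suffix of the relabelled vertex array match the paper's reasoning, so the proposal is correct and essentially identical in approach.
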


%
%

\subsection{Certifying Threshold Graphs in External Memory}

Threshold graphs \cite{C73,G04,MP95} are split graphs with the additional property that the independent set $I$ of the split partition $(K, I)$ has an nno.
Its corresponding forbidden substructures are $2K_2, P_4$ and $C_4$.
Alternatively, threshold graphs can be characterized by a graph generation process: repeatedly add universal or isolated vertices to an initially empty graph.
Conversely, by repeatedly removing universal and isolated vertices from a threshold graph the resulting graph must be the empty graph.
In comparison to certifying split graphs, threshold graphs thus require additional steps.

First, the algorithm certifies whether the input is a split graph.
In the non-membership case, if the returned \ncert\ is a $C_5$ we extract a $P_4$ otherwise we return the substructure immediately.
For the membership case, we recognize whether the input is a threshold graph by repeatedly removing universal and isolated vertices using the previously computed peo $\alpha$ in $\OO{\sort(m)}$ I/Os by \autoref{prop:universal-isolated}.
If the remaining graph is empty, we return the independent set~$I$ with its non-decreasing degree ordering.
Note that after removing a universal vertex $v_i$, vertices with degree one become isolated.
Since low-degree vertices are at the front of $\alpha$, an I/O-efficient algorithm cannot determine them on-the-fly after removing a high-degree vertex. Therefore pre-processing is required.
For every vertex $v_i$ we compute the number of vertices $S(v_i)$ that become isolated after the removal of $\{v_i, \ldots, v_n\}$.
To do so, we iterate over $\alpha$ in non-descending order and check for $v_i$ with $L(v_i) = \emptyset$.
Since $v_i$ has no lower ranked neighbors, it would become isolated after removing all vertices in $H(v_i)$, in particular when the successor with smallest index $v_j \in H(v_i)$ is removed.
We save $v_j$ in a vector \textsf{S} and sort \textsf{S} in non-ascending order.
The values $S(v_n), \ldots, S(v_1)$ are now accessible by a scan over \textsf{S} to count the occurrences of each $v_j$ in $\OO{\scan(m)}$ I/Os.

In the non-membership case, there must exist a $P_4$ since the input is split and cannot contain a $C_4$ or a $2K_2$.
We can delete further vertices from the remaining graph that cannot be part of a $P_4$.
For this, let $K' \subset K$ and $I' \subset I$ be the remaining vertices of the split partition.
Any $v \in K'$ where $N(v) \cap I' = \emptyset$ and any $v \in I'$ where $N(v) \cap K' = K'$ cannot be part of a $P_4$ \cite{HK07} and can therefore be deleted.
We proceed by considering and removing vertices of $K$ by non-descending degree and vertices of $I$ by non-ascending degree.
After this process, we retrieve the highest-degree vertex $v$ in $I$ where there exists $\{v, y\} \notin E$ and $\{y, z\} \in E$ where $y\in K$ and $z \in I$ \cite{HK07}.
Additionally, there is a neighbor $w \in K$ of $v$ for which $\{w, z\} \notin E$ \cite{HK07} and we return the $P_4$ given by $G[\{v,w,y,z\}]$. 
Finding the $P_4$ therefore only requires $\OO{\scan(n + m)}$ I/Os.


\begin{proposition}
	\label{prop:universal-isolated}
	Verifying that a non-decreasing degree ordering $\alpha = (v_1, \ldots, v_n)$ of a graph $G$ with $n$ vertices and $m$ edges emits an empty graph after repeatedly removing universal and isolated vertices requires $\OO{\sort(n) + \scan(m)}$ I/Os.
\end{proposition}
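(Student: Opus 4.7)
The plan is to implement the universal/isolated removal process via a two-pointer simulation over $\alpha$. The critical observation is that removals happen only at the two extremes of the ordering: universals are always removed from the high-degree end and isolated vertices from the low-degree end, so the remaining vertex set is always a contiguous range $\{v_\ell,\ldots,v_h\}$. Moreover, both removal types preserve the non-decreasing ordering of current degrees, so whenever an isolated vertex exists in the current graph, $v_\ell$ is one, and whenever a universal exists, $v_h$ is one. The simulation state therefore collapses to the pair of pointers $\ell,h$ together with a counter $\rho$ of currently-isolated low-end vertices.

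I would first preprocess to predict the step at which each vertex becomes isolated. One pass over the $\alpha$-sorted edge list, at $\OO{\scan(m)}$ I/Os, computes for every $v_i$ the flag $[L(v_i)=\emptyset]$ together with $\mu(v_i):=\min H(v_i)$ (taking $\mu(v_i)=\infty$ if $H(v_i)=\emptyset$) whenever the flag holds. The pairs $(v_i,\mu(v_i))$ are collected into the vector $\textsf{S}$ and sorted by $\mu$ in non-ascending order using $\OO{\sort(n)}$ I/Os; a linear scan of the sorted $\textsf{S}$ then yields a count array $C$ where $C[j]$ records the number of vertices that become newly isolated exactly when $v_j$ is removed.

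For the simulation itself, I would use the fact that $v_h$ is universal in the current graph iff $\deg(v_h)=n-\ell$: every removed universal (index $>h$) was adjacent to $v_h$ at its time of removal, while every removed isolated vertex (index $<\ell$) was not. Analogously, $v_\ell$ is isolated iff $\mu(v_\ell)>h$, which is exactly what $\rho$ tracks once I initialize it to the count of originally-isolated vertices and bump it by $C[h]$ each time $h$ is decremented. Every step preferentially removes the low vertex if $\rho>0$, else removes the high vertex if $\deg(v_h)=n-\ell$, else reports that the process cannot empty the graph. Both pointers advance monotonically, so the simulation is realized by one forward plus one backward scan of the vertex array, at $\OO{\scan(n)}$ I/Os.

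The main obstacle is justifying the two invariants that let me replace the search for an arbitrary universal/isolated vertex by constant-time tests on the endpoints: (i) the current-degree ordering remains non-decreasing throughout, and (ii) $L(v_\ell)=\emptyset$ holds whenever $v_\ell$ is the current low endpoint (so the isolation criterion depends only on $\mu$). Both are inductive, using that an isolated removal does not change other degrees and that a universal removal shifts all of them uniformly by $-1$, and that a previously removed isolated $v_k$ with $k<\ell$ cannot have had $v_\ell$ as a neighbor. Combined with the observation that a vertex newly isolated upon removing $v_h$ necessarily lies in $\{v_\ell,\ldots,v_{h-1}\}$ (so $\rho$ faithfully represents available removable vertices), correctness follows, and summing $\OO{\scan(m)+\sort(n)}$ for preprocessing with $\OO{\scan(n)}$ for the simulation yields the claimed bound $\OO{\sort(n)+\scan(m)}$.
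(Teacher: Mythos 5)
Your proposal is correct and takes essentially the same route as the paper's proof (see \autoref{algo:p4threshold}): the identical preprocessing records, for each vertex with $L(v_i)=\emptyset$, the smallest higher-ranked neighbor whose removal isolates it, sorts these triggers, and then a single reverse scan certifies universality of every non-isolated vertex against a running deletion count --- your test $\deg(v_h)=n-\ell$ is equivalent to the paper's $|L(v_i)| = (n-1)-n_{\text{del}}$ under the invariants you state. The only difference is presentational: you keep an explicit low pointer and isolation counter where the paper folds both into $n_{\text{del}}$, and you spell out the monotonicity invariants that the paper's terse proof leaves implicit.
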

\begin{proof}
	Generating the values $S(v_n), \ldots, S(v_1)$ requires a scan over all adjacency lists in non-descending order and sorting \textsf{S} which takes $\OO{\sort(n) + \scan(m)}$ I/Os.
	Afte pre-processing, the algorithm only requires a reverse scan over the degrees $d_n, \ldots, d_1$, see \autoref{algo:p4threshold}.
	We iterate over $\alpha$ in reverse order, where for each $v_i$ we check whether $L(v_i) = \emptyset$.
	If $v_i$ is not isolated it must be universal.
	Therefore we compare its current degree $\deg(v_i)$ with the value $(n - 1) - n_{\text{del}}$ where $n_{\text{del}} = \sum_{j=j+1}^{n} S(v_j)$.
	All operations take $\OO{\scan(m)}$ I/Os in total. \hfill \qed
\end{proof}

We summarize our findings for threshold graphs in \autoref{lem:threshold}.

\begin{lemma}
	\label{lem:threshold}
	A graph with $n$ vertices and $m$ edges stored in external memory is certified whether it is a threshold graph or not in $\OO{\sort(n + m)}$ I/Os.
	In the membership case the algorithm returns in $\OO{\scan\left(\beta \right)}$ I/Os a nested neighborhood ordering $\beta$ as the \ycert, and otherwise it returns a $\OO{1}$-size \ncert.
\end{lemma}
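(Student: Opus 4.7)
The plan is to assemble the lemma from the subroutines already established and the pre-lemma discussion. First I would invoke \autoref{lem:split} to certify in $\OO{\sort(n+m)}$ I/Os whether $G$ is a split graph. If $G$ is not split, the returned \ncert{} is one of $2K_2, C_4$ or $C_5$. Since $2K_2$ and $C_4$ are themselves forbidden induced subgraphs for threshold graphs, I would forward them unchanged; if the \ncert{} is a $C_5$, I would delete any one vertex of the cycle to obtain an induced $P_4$, which takes $\OO{1}$ I/Os. This discharges the non-membership branch with a constant-size \ncert.

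If $G$ is split, I already have the split partition $(K,I)$ together with the peo $\alpha$. I would then apply \autoref{prop:universal-isolated} to decide, in $\OO{\sort(n)+\scan(m)}=\OO{\sort(n+m)}$ I/Os, whether repeated removal of universal and isolated vertices empties the graph. In the positive case, the graph is threshold; the independent set $I$ sorted by non-decreasing degree is an nno (neighborhoods of $I$-vertices in $K$ are nested in a threshold graph), and I would stream this ordering $\beta$ to output in $\OO{\scan(\beta)}$ I/Os, yielding the \ycert.

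In the negative case, I know $G$ is split but not threshold, so a $P_4$ must exist (since $2K_2$ and $C_4$ are already excluded by the split check). Following the recipe in the pre-lemma text, I would first prune the split partition to $K',I'$ by removing any $v\in K'$ with $N(v)\cap I'=\emptyset$ and any $v\in I'$ with $N(v)\cap K'=K'$; these prunings reduce to sorting edges by endpoint degrees against the known sizes $|K'|,|I'|$, costing $\OO{\sort(n+m)}$ I/Os in total. I would then locate the highest-degree $v\in I'$ that has a non-neighbor $y\in K'$ with a neighbor $z\in I'$, and complete the $P_4$ by locating $w\in N(v)\cap K'$ with $\{w,z\}\notin E$; both vertices are guaranteed to exist by~\cite{HK07} and can be found by scanning only $\OO{1}$ adjacency lists, using $\OO{\scan(n+m)}$ I/Os. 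The returned subgraph $G[\{v,w,y,z\}]$ is a constant-size \ncert.

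The expected main obstacle is bookkeeping rather than a genuine algorithmic difficulty: the non-threshold branch interleaves information produced at different phases (the split partition, the degrees, the $S(v_i)$ counters of \autoref{prop:universal-isolated}, and the pruned sets $K',I'$), and each lookup must be implementable by sort/scan primitives against externally stored arrays. Summing all contributions, every step is bounded by $\OO{\sort(n+m)}$ I/Os, and the certificates are emitted within the stated bounds, which establishes \autoref{lem:threshold}.
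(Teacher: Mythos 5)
Your proposal follows the same route as the paper: first certify splitness via \autoref{lem:split} (forwarding the $2K_2$/$C_4$ and extracting a $P_4$ from a $C_5$ in the non-split case), then decide thresholdness via \autoref{prop:universal-isolated}, and in the split-but-not-threshold case prune $K',I'$ and extract a $P_4$ exactly as in the pre-lemma discussion, with all steps bounded by $\OO{\sort(n+m)}$ I/Os. The paper's proof is just a terser summary of these same three steps, so your argument is correct and essentially identical.
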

\begin{proof}
	Certifying that the input graph is a split graph requires $\OO{\sort(n + m)}$ I/Os by \autoref{lem:split}.
	If it is, we check if the input is a threshold graph directly by checking whether the graph is empty after repeatedly removing universal and isolated vertices in $\OO{\sort(m)}$ I/Os by \autoref{prop:universal-isolated}. 
	Otherwise we have to find a $P_4$, since the input is a split but not a threshold graph. 
	Hence, this step requires  $\OO{\scan(n + m)}$ I/Os and the total I/Os are $\OO{\sort(n + m)}$. \hfill \qed
\end{proof}

\subsection{Certifying Trivially Perfect Graphs in External Memory}
\label{subsec:triviallyperfect}

Trivially perfect graphs have no vertex subset that induces a $P_4$ or a $C_4$ \cite{G04}.
In contrast to split graphs, any non-increasing degree ordering of a trivially perfect graph is a uco \cite{HK07}.
In fact, this is a one-to-one correspondence: a non-increasing sorted degree sequence of a graph is a uco iff the graph is trivially perfect \cite{HK07}.

\begin{algorithm}[t]
	\DontPrintSemicolon
	\caption{Recognizing Universal-in-a-Component Orderings in EM}
	\label{algo:uco}
	\KwData{edges $E$ of graph $G$, non-increasing degree ordering $\gamma = (v_1, \ldots, v_n)$}
	\KwOut{bool whether $\gamma$ is a uco}
	Sort $E$ and relabel according to $\gamma$ \\
	\For{$i = 1, \ldots, n$}{
		Vector \textsf{L} $= [0]$ \tcp*{initialize with 0}
		\While(\hfill \CommentSty{// $v_i$'s received labels}){$\langle v, v_j, \ell \rangle \leftarrow$ {\normalfont \textsf{PQ.top}()} where $v = v_i$}{
			\textsf{L.push}($\ell$) \\
			\textsf{PQ.pop}()
		}
		\For(\hfill \CommentSty{// \textsf{L.size} is even}){$i = 1, \ldots, ${\normalfont \textsf{L.size}}$/ 2$}{
			\If(\hfill \CommentSty{// mismatch / anomaly}){{\normalfont \textsf{L}[$2i$] $\ne$ \textsf{L}[$2i{+}1$] \textbf{and} \textsf{L.size} $> 1$}}{
				\Return \textsc{false}
			}
		}
		$\ell(v_i) \leftarrow \textsf{L}$[\textsf{L.size}] \tcp*{assign label of $v_i$}
		Retrieve $H(v_i)$ from $E$ \tcp*{scan E}
		\For{$u \in H(v_i)$}{
			\textsf{PQ.push}($ \langle u, v_i, \ell(v_i) \rangle $); \textsf{PQ.push}($ \langle u, v_i, i \rangle $) \\
		}
	}
	\Return \textsc{true}
\end{algorithm}

In external memory this can be verified using TFP by adapting the algorithm in \cite{HK07}, see \autoref{algo:uco}.
After computing a non-increasing degree ordering $\gamma$ the algorithm relabels the edges of the graph according to $\gamma$ and sorts them.
Now we iterate over the vertices in non-descending order of $\gamma$, process for each vertex $v_i$ its received messages and relay further messages forward in time.
 
Initially all vertices are labeled with $0$.
Then, at step $i$ vertex $v_i$ checks that all adjacent vertices $N(v_i)$ have the same label as $v_i$.
After this, $v_i$ relabels each vertex $u \in N(v_i)$ with its own index $i$ and is then removed from the graph.
In the external memory setting we cannot access labels of vertices and relabel them on-the-fly but rather postpone the comparison of the labels to the adjacent vertices instead.
To do so, $v_i$ forwards its own label $\ell(v_i)$ to $u \in H(v_i)$ by sending two messages $\langle u, v_i, \ell(v_i) \rangle$ and $\langle u, v_i, i \rangle$ to $u$, signaling that $u$ should compare its own label to $v_i$'s label $\ell(v_i)$ and then update it to $i$.
Since the label of any adjacent vertex is changed after processing a vertex, when arriving at vertex $v_j$ an odd number of messages will be targeted to $v_j$, where the last one corresponds to its actual label at step $j$.
Then, after collecting all received labels, we compare disjoint consecutive pairs of labels and check whether they match.
In the membership case, we do not find any mismatch and return $\gamma$ as the \ycert.
Otherwise, we have to return a $P_4$ or $C_4$.

In the description of \cite{HK07} the authors stop at the first anomaly where $v_i$ detects a mismatch in its own label and one of its neighbors.
We simulate the same behavior by writing out every anomaly we find, e.g.~that $v_j$ does not have the expected label of $v_i$ via an entry $\langle v_i, v_j, k\rangle$ where $k$ denotes the found label of $v_j$.
After sorting the entries, we find the earliest anomaly $\langle v_i, v_j, k\rangle$ with the largest label $k$ of $v_i$'s neighbors.
Since $v_j$ received the label $k$ from $v_k$, but $v_i$ did not, it is clear that $v_k$ is not universal in its connected component in $G[\{v_k, v_{k+1}, \ldots, v_n\}]$ and we thus return a $P_4$ or $C_4$.
Note that $(v_k, v_j, v_i)$ already constitutes a $P_3$ where $\deg(v_k) \ge \deg(v_j)$, because $v_j$ received the label $k$.
Since $v_j$ is adjacent to both $v_k$ and $v_i$ and $\deg(v_k) \ge \deg(v_j)$, there must exist a vertex $x \in N(v_k)$ where $\{v_j, x\} \notin E$.
Thus, $G[\{ v_k, v_j, v_i, x \}]$ is a $P_4$ if $\{v_i, x\} \notin E$ and a $C_4$ otherwise.
Finding $x$ and determining whether the forbidden substructure is a $P_4$ or a $C_4$ requires scanning $\OO{1}$ adjacency lists in $\OO{\scan(n)}$ I/Os.

\begin{algorithm}[tp]
		\DontPrintSemicolon
		\caption{Recognizing Threshold Graphs for Split Graphs in EM}
		\label{algo:p4threshold}
		\KwData{edges $E$ of split graph $G$, max.~clique size $k$, peo $\alpha = (v_1, \ldots, v_n)$}
		\KwOut{bool whether $G$ is threshold}
		Sort $E$ and relabel according to $\alpha$ \\
		Vector \textsf{S} \\
		\For{$i = 1, \ldots, n$}{
			\If{$L(v_i) = \emptyset$}{
				Let $v_j$ be the smallest successor of $v_i$ in $H(v_i)$ \\
				\textsf{S.push}($v_j$) \tcp*{$v_i$ would be isolated after deleting $\{v_j, \ldots, v_n \}$}
			}
		}
		Sort \textsf{S} in non-ascending order\\
		$n_{\text{del}} \leftarrow 0$ \tcp*{number of deleted universal/isolated vertices}
		\For{$i = n, \ldots, 1$}{
			\If(\hfill \CommentSty{// $v_i$ not isolated in $G[\{ v_1, \ldots, v_n \}]$}){$L(v_i) \ne \emptyset$}{
				\If(\hfill \CommentSty{// $v_i$ not universal}){$|L(v_i)| < (n - 1) - n_{\text{del}}$}{
					\Return \textsc{false}
				}
				$n_{\text{del}} \leftarrow n_{\text{del}} + 1 + \text{occurrences of } v_i$ \tcp*{$v_i$ removed, scan \textsf{S}}
			}
		}
		\Return \textsc{true}
	\end{algorithm}


\begin{proposition}
	\label{prop:uco}
	Verifying that a non-increasing degree ordering $\gamma = (v_1, \ldots, v_n)$ of a graph $G$ with $n$ vertices and $m$ edges is a universal-in-a-component-ordering requires $\OO{\sort(m)}$ I/Os.
\end{proposition}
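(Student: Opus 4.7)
The plan is to analyze Algorithm~\ref{algo:uco} by charging the I/O cost of each building block separately, mirroring the argument used for \autoref{prop:peo}. The key observation that makes time-forward processing applicable is that every message pushed during iteration $i$ is addressed to some $u \in H(v_i)$, which by construction has a strictly larger index in $\gamma$ than $v_i$; hence the priority queue's ordering by recipient index coincides with the order in which vertices are processed.

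First, I would absorb the preprocessing into the budget. Relabelling and sorting the edge array $E$ according to $\gamma$ costs $\OO{\sort(m)}$ I/Os via standard external sorting, after which the adjacency lists appear consecutively in non-decreasing order of the endpoint index. This initial sort lets the outer loop retrieve each $H(v_i)$ by advancing a single scanning pointer through $E$, so across all $n$ iterations the retrieval contributes only $\OO{\scan(m)}$ I/Os.

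Second, I would bound the priority-queue workload. For each iteration $i$, the algorithm pushes exactly two messages for every $u \in H(v_i)$, so the lifetime total is at most $2m$ pushes and $2m$ pops. Since the topological order required by TFP matches the recipient-index order used by the external priority queue of Arge~\cite{A03}, the amortized cost per operation is $\OO{(1/B)\log_{M/B}(m/B)}$, giving a total TFP cost of $\OO{\sort(m)}$ I/Os. The internal bookkeeping performed per vertex---collecting the dequeued messages into $\textsf{L}$, checking consecutive pairs, and writing the final label---scans $\textsf{L}$ only once and incurs no additional I/Os beyond those charged to the priority-queue operations.

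The delicate point, and what I would treat as the main obstacle, is verifying that the local pair-comparison loop faithfully simulates the in-memory check of~\cite{HK07}. This requires that for a fixed recipient $v_j$, the messages $\langle v_j,\cdot,\cdot\rangle$ be served in the order their senders were processed; since senders are processed by increasing index and the priority queue breaks ties by a secondary key equal to the sender's index (or equivalently by insertion time), the paired entries $(\ell(v_i), i)$ contributed by successive predecessors $v_i$ appear in $\textsf{L}$ in exactly that order. Given this, any inconsistency between the label $v_j$ currently carries and the label a predecessor claims it carries is detectable from a constant-work check on consecutive entries of $\textsf{L}$, so correctness transfers from the internal-memory version and the I/O bound follows by summing the preprocessing, scanning, and TFP contributions to $\OO{\sort(m)}$. \hfill\qed
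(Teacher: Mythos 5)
Your proposal is correct and follows essentially the same route as the paper's proof: charge the $\OO{\sort(m)}$ preprocessing, bound the TFP workload by the $\OO{m}$ messages (two per edge), and defer correctness to the internal-memory argument of Heggernes and Kratsch since the algorithm only delays the label comparisons. The one point you elaborate beyond the paper---that the priority queue must break ties among a recipient's messages by sender index so the $(\ell(v_i), i)$ pairs arrive in processing order---is a genuine detail the paper leaves implicit, and your treatment of it is sound.
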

\begin{proof}
	Every vertex $v_i$ receives exactly two messages per neighbor in $L(v_i)$ and verifies that all consecutive pairs of labels match.
	Then, either the label $i$ is sent to each higher ranked neighbor of $H(v_i)$ via TFP or it is verified that $\gamma$ is not a uco.
	Since at most $\OO{m}$ messages are inserted, the resulting overall complexity is $\OO{\sort(m)}$ I/Os.
	Correctness follows from \cite{HK07} since \autoref{algo:uco} performs the same operations but only delays the label comparisons. \hfill \qed
\end{proof}

We again summarize our results in \autoref{lem:triviallyperfect}.

\begin{lemma}
	\label{lem:triviallyperfect}
	A graph with $n$ vertices and $m$ edges stored in external memory is certified whether it is a trivially perfect graph or not in $\OO{\sort(n + m)}$ I/Os.
	In the membership case the algorithm returns in $\OO{\scan\left(\gamma \right)}$ I/Os the universal-in-a-component ordering $\gamma$ as the \ycert, and otherwise it returns a $\OO{1}$-size \ncert.
\end{lemma}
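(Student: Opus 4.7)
The plan is to assemble the ingredients already established in this subsection. First, I compute the non-increasing degree ordering $\gamma = (v_1, \ldots, v_n)$ by sorting the degrees and relabeling, which costs $\OO{\sort(n)}$ I/Os, and then sort the edge list according to $\gamma$ in $\OO{\sort(m)}$ I/Os. Next, I run \autoref{algo:uco} which, by \autoref{prop:uco}, decides in $\OO{\sort(m)}$ I/Os whether $\gamma$ is a uco. Since a non-increasing degree sequence is a uco iff the graph is trivially perfect~\cite{HK07}, this settles the membership question within the claimed budget.

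In the membership case, $\gamma$ itself already is the requested \ycert; writing it out amounts to a single scan of the ordering, costing $\OO{\scan(\gamma)}$ I/Os as stated.

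In the non-membership case, I rely on the anomaly log produced by \autoref{algo:uco}. Each detected mismatch at some vertex $v_i$ with neighbor $v_j$ carrying an unexpected label $k$ is emitted as a triple $\langle v_i, v_j, k \rangle$. Sorting these triples in $\OO{\sort(m)}$ I/Os lets me identify the earliest anomaly with the largest $k$, which faithfully mimics the first anomaly that the internal-memory algorithm of~\cite{HK07} would detect and exposes the offending $P_3 = (v_k, v_j, v_i)$. I then scan the adjacency lists of $v_k$, $v_j$, $v_i$ in $\OO{\scan(n)}$ I/Os to find the vertex $x \in N(v_k) \setminus N(v_j)$ guaranteed by $\deg(v_k) \geq \deg(v_j)$, and test whether $\{v_i, x\} \in E$ by another scan; the result is the $\OO{1}$-size \ncert\ $G[\{v_k, v_j, v_i, x\}]$, which is either a $P_4$ or a $C_4$.

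Summing the contributions yields $\OO{\sort(n + m)}$ in the worst case, matching the claim. The main delicacy is the non-membership case: I must argue that the ``earliest-with-largest-label'' anomaly selected from the sorted log really corresponds to the first violation that a sequential in-memory run would detect, so that the \cite{HK07} argument for constructing the forbidden induced subgraph from $(v_k, v_j, v_i)$ applies unchanged. This is where the postponed label-comparison semantics of \autoref{algo:uco} must be reconciled with the original pointer-based procedure; the rest is bookkeeping.
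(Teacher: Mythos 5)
Your proposal is correct and follows essentially the same route as the paper, which presents this argument in the running text of the subsection rather than in a displayed proof: compute and sort by the non-increasing degree ordering, verify the uco property via \autoref{algo:uco} and \autoref{prop:uco}, return $\gamma$ on success, and otherwise sort the logged anomalies to recover the first violation $(v_k, v_j, v_i)$ and extend it to a $P_4$ or $C_4$ by $\OO{1}$ adjacency-list scans. The delicacy you flag --- that the earliest-with-largest-label anomaly reproduces the first mismatch of the sequential algorithm of~\cite{HK07} --- is exactly the point the paper also relies on when it simulates the on-the-fly relabeling by delayed message comparison.
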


\subsection{Certifying Bipartite Chain Graphs in External Memory}

Bipartite chain graphs are bipartite graphs where one part of the bipartition has an nno \cite{Y81} similar to threshold graphs.
Interestingly, for chain graphs one side of the bipartition exhibits this property if and only if both partitions do \cite{Y81}.
Its forbidden induced substructures are $2K_2, C_3$ and $C_5$.
By definition, bipartite chain graphs are bipartite graphs which therefore requires I/O-efficient bipartiteness testing.

We follow the linear time internal memory approach of \cite{HK07} with slight adjustments to accommodate the external memory setting.
First, we check whether the input is indeed a bipartite graph.
Instead of using breadth-first search which is very costly in external memory, even for constrained settings \cite{AM09}, we can use a more efficient approach with spanning trees which is presented in \autoref{lem:cycle}.
In case the input is not connected, we simply return two edges of two different components as the $2K_2$.
If the graph is connected, we proceed to verify that the graph is bipartite and return a \ncert\ in the form of a $C_3, C_5$ or $2K_2$ in case it is not.
In order to find a $C_3, C_5$ or $2K_2$ some modifications to \autoref{lem:cycle} are necessary.
Essentially, the algorithm instead returns a minimum odd cycle that is built from $T$ and a single non-tree edge.
Due to minimality we can then find a $2K_2$.
The result is summarized in \autoref{cor:cycle2}.

Then, it remains to show that each side of the bipartition has an nno.
Let~$U$ be the larger side of the partition.
By \cite{MP95} it suffices to show that the input is a chain graph iff the graph obtained by adding all possible edges with both endpoints in $U$ is a threshold graph.
Instead of materializing the mentioned threshold graph, we implicitly represent the adjacencies of vertices in $U$ to retain the same I/O-complexity and apply \autoref{lem:threshold} using $\OO{\sort(n + m)}$ I/Os.
If the input is bipartite but not chain, we repeatedly delete vertices that are connected to all other vertices of the other side and the resulting isolated vertices, similar to \autoref{subsec:triviallyperfect} and \cite{HK07}.
After this, the vertex $v$ with highest degree has a non-neighbor $y$ in the other partition.
By similar arguments $y$ is adjacent to another vertex $z$ that is adjacent to a vertex $x$ where $\{v, x\} \notin E$ \cite{HK07}.
As such $G[\{v,y,z,x\}]$ is a $2K_2$ and can be found in $\OO{\scan(n)}$ I/Os and returned as the \ncert.

\begin{lemma}
	\label{lem:cycle}
	A graph with $n$ vertices and $m$ edges stored in external memory is certified whether it is a bipartite graph or not in $\OO{\sort(n + m)}$ I/Os, given a spanning forest of the input graph.
	In the membership case the algorithm returns in $\OO{scan\left(n\right)}$ I/Os a bipartition $(U, V \setminus U)$ as the \ycert, and otherwise it returns an odd cycle as the \ncert.
\end{lemma}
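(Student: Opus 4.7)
The plan is to exploit the given spanning forest $T$ to build a candidate $2$-coloring of $G$ and then verify it against the non-tree edges, producing either the bipartition or an odd cycle through $T$ closed by a single offending edge.

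First I would compute an Euler tour of each tree in $T$ using the standard external memory Euler-tour and list-ranking pipeline, all of which runs in $\OO{\sort(n)}$ I/Os. With the Euler tour in hand, I derive the depth $d(v)$ of every vertex by a prefix sum along the tour, and assign the color $c(v) = d(v) \bmod 2$. By construction every tree edge is bichromatic, so $T$ itself is 2-colored consistently across each of its components. All of this costs $\OO{\sort(n)}$ I/Os.

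Next I would verify the coloring against the non-tree edges. Scanning $E$ and $T$ jointly after a sort identifies the non-tree edges in $\OO{\sort(m)}$ I/Os. For each such edge $\{u,v\}$ I annotate $u$ and $v$ with their colors via two sort-and-scan joins against the color array, charging another $\OO{\sort(m)}$ I/Os. If every non-tree edge is bichromatic, then $c$ is a valid $2$-coloring of $G$ and I emit the bipartition $(U, V \setminus U)$ by one scan over the colored vertices in $\OO{\scan(n)}$ I/Os. Otherwise, the join exposes a monochromatic non-tree edge $\{u,v\}$, and the unique tree path $\pi_{u,v}$ in $T$ together with $\{u,v\}$ forms a closed walk whose length has the parity of $d(u) + d(v) + 1$, which is odd since $c(u) = c(v)$.

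To output the odd cycle I would then compute $w = \LCA(u,v)$ by the classical reduction to a range-minimum query on the depths along the Euler tour; a single such query can be answered in $\OO{\sort(n)}$ I/Os. The two tree paths $u \to w$ and $v \to w$ are then extracted by a TFP pass over the parent array that batches all required ancestor lookups in the order they are resolved, emitting $\pi_{u,v}$ and hence the odd cycle in $\OO{\sort(n)}$ I/Os. The main obstacle here is exactly this path extraction: a naive parent-pointer chase would incur $\Omega(n)$ random I/Os, and the critical observation is that once the LCA is known via the Euler tour we need only stream the ancestors of two fixed vertices, which TFP converts into a single sorted pass. Summing the contributions of every stage yields the claimed $\OO{\sort(n+m)}$ bound, with the YES-certificate emitted in $\OO{\scan(n)}$ I/Os and the NO-certificate emitted within the overall budget. \hfill \qed
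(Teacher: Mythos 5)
Your proof is correct and follows essentially the same route as the paper: Euler-tour depths on the spanning forest, a parity check of the non-tree edges, and a lowest-common-ancestor computation to extract the odd cycle. The only (harmless) deviation is that you test parity directly via the colors $d(v) \bmod 2$ and compute the LCA only for the single offending edge, whereas the paper batch-computes LCAs and exact tree distances $d_T(u,v)$ for all non-tree edges; both variants stay within $\OO{\sort(n+m)}$ I/Os.
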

\begin{proof}
	In case there are multiple connected components, we operate on each individually and thus assume that the input is connected.
	Let $T$ be the edges of the spanning tree and $E \setminus T$ the non-tree edges.
	Any edge $e \in E \setminus T$ may produce an odd cycle by its addition to $T$.
	In fact, the input is bipartite iff $T \cup \{ e \}$ is bipartite for all $e \in E \setminus T$\footnote{Since $T$ is bipartite, one can think of $T$ as a representation of a $2$-coloring on $T$.}.
	We check whether an edge $e = \{u, v\}$ closes an odd cycle in $T$ by computing the distance $d_T(u, v)$ of its endpoints in $T$.
	Since this is required for every non-tree edge $E \setminus T$, we resort to batch-processing.
	Note that $T$ is a tree and hence after choosing a designated root $r \in V$ it holds that $d_T(u, v) = d_T(u, \LCA(u, v)) + d_T(v, \LCA(u, v))$ where $\LCA(u, v)$ is the lowest common ancestor of $u$ and $v$ in $T$.
	Therefore for every edge $E \setminus T$ we compute its lowest common ancestor in $T$ using $\OO{(m/n)\cdot\sort(n)} = \OO{\sort(m)}$ I/Os \cite{CGGTVV95}.
	
	Additionally, for each vertex $v \in V$ we compute its depth in $T$ in $\OO{\sort(m)}$ I/Os using Euler Tours \cite{CGGTVV95} and inform each incident edge of this value by a few scanning and sorting steps.
	Similarly, each edge $e = \{u, v\}$ is provided of the depth of $\LCA(u, v)$.
	Then, after a single scan over $E \setminus T$ we compute $d_T(u, v)$ and check if it is even.
	If any value is even, we return the odd cycle as a \ncert\ or a bipartition in $T$ as the \ycert.
	Both can be computed using Euler Tours in $\OO{\sort(m)}$ I/Os. \hfill \qed
\end{proof}

\begin{corollary}
	\label{cor:cycle2}
	If a connected graph $G$ contains a $C_3, C_5$ or $2K_2$ then any of these subgraphs can be found in $\OO{\sort(n+m)}$ I/Os given a spanning tree of $G$.
\end{corollary}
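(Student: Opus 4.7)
The plan is to reuse and slightly extend the pipeline of Lemma \ref{lem:cycle}. Instead of returning an arbitrary odd-cycle-closing non-tree edge, we compute the non-tree edge $e^{\star}\in E\setminus T$ whose tree distance $d_T(u,v)$ is even and minimum, yielding the shortest fundamental cycle $C=v_1,v_2,\ldots,v_{\ell},v_1$ of odd length, where $v_1\cdots v_{\ell}$ is the tree path between the endpoints of $e^{\star}$ and $\{v_{\ell},v_1\}=e^{\star}$. Depths, the Euler tour, and the batched LCA of every non-tree edge are already computed by Lemma \ref{lem:cycle} in $\OO{\sort(n+m)}$ I/Os; one additional sorting pass over $E\setminus T$ annotated with $d_T(u,v)+1$ then selects $e^{\star}$ and one Euler-tour scan of $T$ outputs the ordered vertex list of $C$ within the same complexity.

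If $\ell\in\{3,5\}$, $C$ itself is the desired $C_3$ or $C_5$. Otherwise $\ell\geq 7$, and we claim that $G$ contains a $C_3$, $C_5$, or $2K_2$ on a constant number of vertices of $C$ that can be located in $\OO{\scan(n+m)}$ I/Os. The minimality of $\ell$ rules out every non-tree chord $\{v_i,v_j\}$ whose tree-path difference $|j-i|$ is even with $|j-i|+1<\ell$, since such a chord would close a strictly shorter odd fundamental cycle. In particular the chords $\{v_2,v_4\}$ (potential fundamental length $3$) and $\{v_1,v_5\}$ (length $5$) are impossible, so $G[\{v_1,v_2,v_4,v_5\}]$ is a $2K_2$ \emph{unless} one of the two remaining candidates $\{v_1,v_4\}$ or $\{v_2,v_5\}$---whose fundamental cycles have even length $4$ and are therefore not directly excluded---lies in $E$. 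A scan of $N(v_1)\cup N(v_2)$ decides this in $\OO{\scan(n)}$ I/Os.

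If at least one of the two problematic chords is present, it combines with $C$ into a strictly shorter odd cycle (e.g.\ $v_1,v_4,v_5,\ldots,v_{\ell},v_1$ of length $\ell-2$), and a constant-sized case analysis in the spirit of \cite{HK07} inspects $\OO{1}$ further candidate chords near $e^{\star}$ (such as $\{v_1,v_6\}$ or $\{v_4,v_7\}$) to exhibit either a triangle through $e^{\star}$, an induced $5$-cycle, or a $2K_2$ on nearby vertices. Each check requires $\OO{1}$ adjacency-list scans at $\OO{\scan(n)}$ I/Os, keeping the total within $\OO{\sort(n+m)}$ I/Os. The main obstacle is verifying that this combinatorial reduction always terminates in $\OO{1}$ queries independent of $\ell$, rather than cascading through $\Theta(\ell)$ iterations of cycle shortening; the argument hinges on the fact that minimality forbids \emph{all} even-difference chords of cycle-distance at most $\ell-2$, so whenever any odd-difference chord is actually present it must close a short walk with $e^{\star}$ and the adjacent tree edges, yielding a $C_3$ or induced $C_5$ at bounded distance from $v_1$.
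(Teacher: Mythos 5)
Your first phase coincides with the paper's: augment \autoref{lem:cycle} to keep, over all non-tree edges, the one whose fundamental cycle is odd and shortest, and output that cycle $C$ via one more sorting pass and an Euler-tour scan. Your observation that minimality of $C$ excludes every chord $\{v_i,v_j\}$ whose fundamental cycle $v_i,\ldots,v_j,v_i$ is odd and shorter is also correct. The gap is exactly where you flag it. Your candidate pair $\{v_1,v_2\},\{v_4,v_5\}$ sits at distance $2$ on $C$, so two of the four connecting chords, $\{v_1,v_4\}$ and $\{v_2,v_5\}$, close only \emph{even} fundamental $4$-cycles and are not excluded by anything you have established; they genuinely can occur. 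If, say, $\{v_1,v_4\}\in E$, then $G[\{v_1,v_2,v_4,v_5\}]$ is a $P_4$, not a $2K_2$, and the induced $C_4$ on $v_1,\ldots,v_4$ is useless here since a $C_4$ is not forbidden for bipartite chain graphs. The shorter odd cycle $v_1,v_4,v_5,\ldots,v_\ell,v_1$ you then fall back on is not a fundamental cycle, so it does not contradict the minimality your algorithm actually certifies, and repeating your argument on it recreates the same two unexcluded diagonal chords one position further along --- precisely the $\Theta(\ell)$ cascade you worry about. The ``constant-sized case analysis in the spirit of \cite{HK07}'' is asserted, not supplied, so the proof is incomplete at its only nontrivial step.

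The paper's proof escapes this by choosing the two edges of the candidate $2K_2$ to be (nearly) antipodal on $C$: the non-tree edge $\{u_1,u_k\}$ together with the middle tree edge $\{u_\ell,u_{\ell+1}\}$, where $k=2\ell+1$. With that choice each of the four connecting chords splits $C$ into two cycles of lengths among $\ell$, $\ell+1$, $\ell+2$, all strictly less than $k$, and whichever piece is odd is a shorter odd cycle; the case analysis therefore closes in a single round with no cascading. (Two of those four shorter odd cycles are themselves non-fundamental, so the paper too is implicitly treating $C$ as a shortest odd cycle rather than merely a shortest odd \emph{fundamental} cycle --- but its antipodal configuration at least terminates immediately, whereas yours cannot.) The concrete fix for your write-up is to replace $\{v_4,v_5\}$ by the edge of $C$ opposite $e^\star$; with two edges only distance $2$ apart there is no way to finish the argument.
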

\begin{proof}
		We extend the algorithm presented in \autoref{lem:cycle} since it does not return an induced cycle.
		While iterating over the edges to find an odd cycle we save the smallest seen odd cycle by keeping a copy of the edge $e \in E \setminus T$ and the length of the minimum odd cycle.
		In case we find a $C_3$ or a $C_5$ we are done and return the \ncert\ immediately otherwise for a $C_k$ with $k = 2\ell + 1 > 5$ we return a $2K_2$ by finding a matching edge to the non-tree edge $e \in E \setminus T$ in the cycle.
		
		Let $C = (u_1, \ldots, u_k, u_1)$ be the returned cycle where $\{u_k, u_1\}$ is the non-tree edge.
		In this case we return for the $2K_2$ the graph ($\{ u_{\ell}, u_{\ell + 1}, u_1, u_k \}$, $\{ \{u_1, u_k\}, \{u_\ell, u_{\ell + 1}\} \}$).
		If $\ell$ is odd, the non-edges of the $2K_2$ cannot exist since otherwise any of the following smaller odd cycles $(u_1, u_2, \ldots, u_{\ell + 1}, u_k, u_1)$, $(u_1, u_2$, $\ldots, u_\ell, u_1)$, $(u_\ell, u_{\ell + 1}, \ldots, u_k, u_\ell)$ and $(u_1, u_{\ell + 1}, u_{\ell + 2}, \ldots, u_k, u_1)$ would be present, contradicting the minimality of $C$.
		For the other case where $\ell$ is even, a similar argument can be found.
		The I/O-complexity therefore remains the same. \hfill \qed
		%
	\end{proof}

We summarize our findings for bipartite chain graphs in \autoref{lem:chain}.

\clearpage

\begin{lemma}
	\label{lem:chain}
	A graph with $n$ vertices and $m$ edges stored in external memory is certified whether it is a bipartite chain graph or not in $\OO{\sort(n + m)}$ I/Os with high probability.
	In the membership case the algorithm returns in $\OO{\scan\left(n\right)}$ I/Os the bipartition $(U, V \setminus U)$ and nested neighborhood orderings of both partitions as the \ycert, and otherwise it returns a $\OO{1}$-size \ncert.
\end{lemma}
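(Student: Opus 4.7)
My plan is to reduce chain-graph certification to a bipartiteness test followed by a threshold certification on an implicitly augmented graph.

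First I compute a spanning forest of $G$ using the standard external-memory connected-components algorithm, which runs in $\OO{\sort(n+m)}$ I/Os with high probability; this is the sole source of the w.h.p.\ qualifier in the statement. If two components each contain an edge, I output one edge from each as a $2K_2$ \ncert\ and halt. Otherwise I invoke \autoref{lem:cycle} and \autoref{cor:cycle2} on the spanning forest to either obtain a bipartition $(U, V\setminus U)$ or, if $G$ is not bipartite, extract a $C_3$, $C_5$, or $2K_2$ in $\OO{\sort(n+m)}$ I/Os.

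Given the bipartition, I take $U$ to be the larger side and use the characterization of~\cite{MP95} that $G$ is chain iff the graph $G'$ obtained by adding every missing edge with both endpoints in $U$ is threshold. To avoid materializing the $\Theta(|U|^2)$ phantom edges, I represent $G'$ implicitly: the augmented degree of each $u\in U$ is computed in one scan as $|U|-1+|N(u)\cap(V\setminus U)|$, and the subroutines underlying \autoref{lem:threshold} (namely \autoref{algo:peo}, \autoref{algo:maximumclique}, and \autoref{algo:p4threshold}) are adapted so that any adjacency query between two vertices of $U$ is answered \textsc{true} by convention rather than by inspecting an edge list. With this simulation each routine still processes only $\OO{n+m}$ real edges and TFP messages, so by \autoref{lem:threshold} $G'$ is certified threshold in $\OO{\sort(n+m)}$ I/Os.

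If $G'$ is threshold, the nno $\beta$ returned by \autoref{lem:threshold} is already an nno of one side of the bipartition; by the symmetry result of~\cite{Y81} an nno of the other side is obtained by sorting that side by its original degree, taking $\OO{\sort(n)}$ I/Os, and the \ycert\ is written out in $\OO{\scan(n)}$ I/Os. Otherwise $G'$ is split but not threshold, and I apply the universal/isolated peeling described just above the lemma directly on the implicit $G'$, then read off a maximum-degree surviving vertex $v\in U$, a non-neighbor $y\in V\setminus U$, a neighbor $z\in U$ of $y$, and a neighbor $x\in V\setminus U$ of $z$ with $\{v,x\}\notin E$; by~\cite{HK07} these vertices exist and $G[\{v,y,z,x\}]$ is the desired $2K_2$, found in $\OO{\scan(n+m)}$ further I/Os. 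The principal obstacle throughout is precisely this implicit representation: every adjacency and degree access performed by the reused threshold machinery must be rerouted so that phantom edges inside $U$ are answered in constant I/O, since even a single pass over them would degrade the complexity to $\OO{\sort(n+|U|^2)}$ and invalidate the claimed bound.
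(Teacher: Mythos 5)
Your proposal follows essentially the same route as the paper: a randomized spanning forest (the source of the w.h.p.\ bound), the connectivity and bipartiteness checks via \autoref{lem:cycle} and \autoref{cor:cycle2}, the reduction of~\cite{MP95} to threshold certification on an implicitly augmented graph via \autoref{lem:threshold}, and the same $2K_2$ extraction in the non-chain case. Your added detail on how the phantom edges inside $U$ are simulated (augmented degrees and constant-I/O adjacency answers) is a correct elaboration of what the paper only states as ``we implicitly represent the adjacencies of vertices in $U$.''
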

\begin{proof}
	Computing a spanning tree $T$ requires $\OO{\sort(n + m)}$ I/Os with high probability by an external memory variant of the Karger, Klein and Tarjan minimum spanning tree algorithm  \cite{CGGTVV95}.
	By \autoref{cor:cycle2} we find a $C_3, C_5$ or $2K_2$ if the input is not bipartite or not connected.
	We proceed by checking the nno's of both partitions in $\OO{\sort(n + m)}$ I/Os using \autoref{lem:threshold}. \hfill \qed
\end{proof}

\section{Experimental Evaluation}
\begin{figure}[t]
	\includegraphics[width=.5\textwidth]{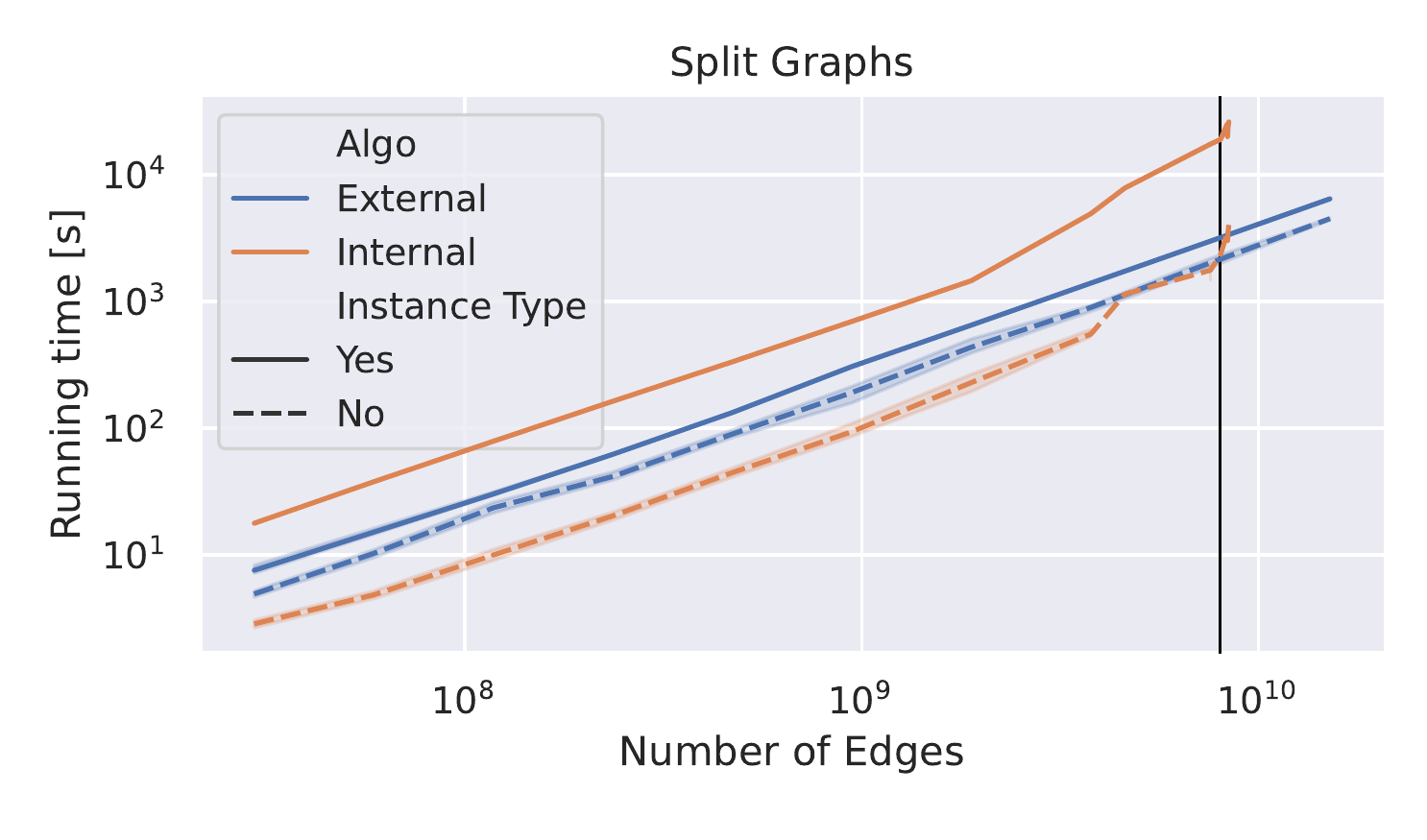}
	\includegraphics[width=.5\textwidth]{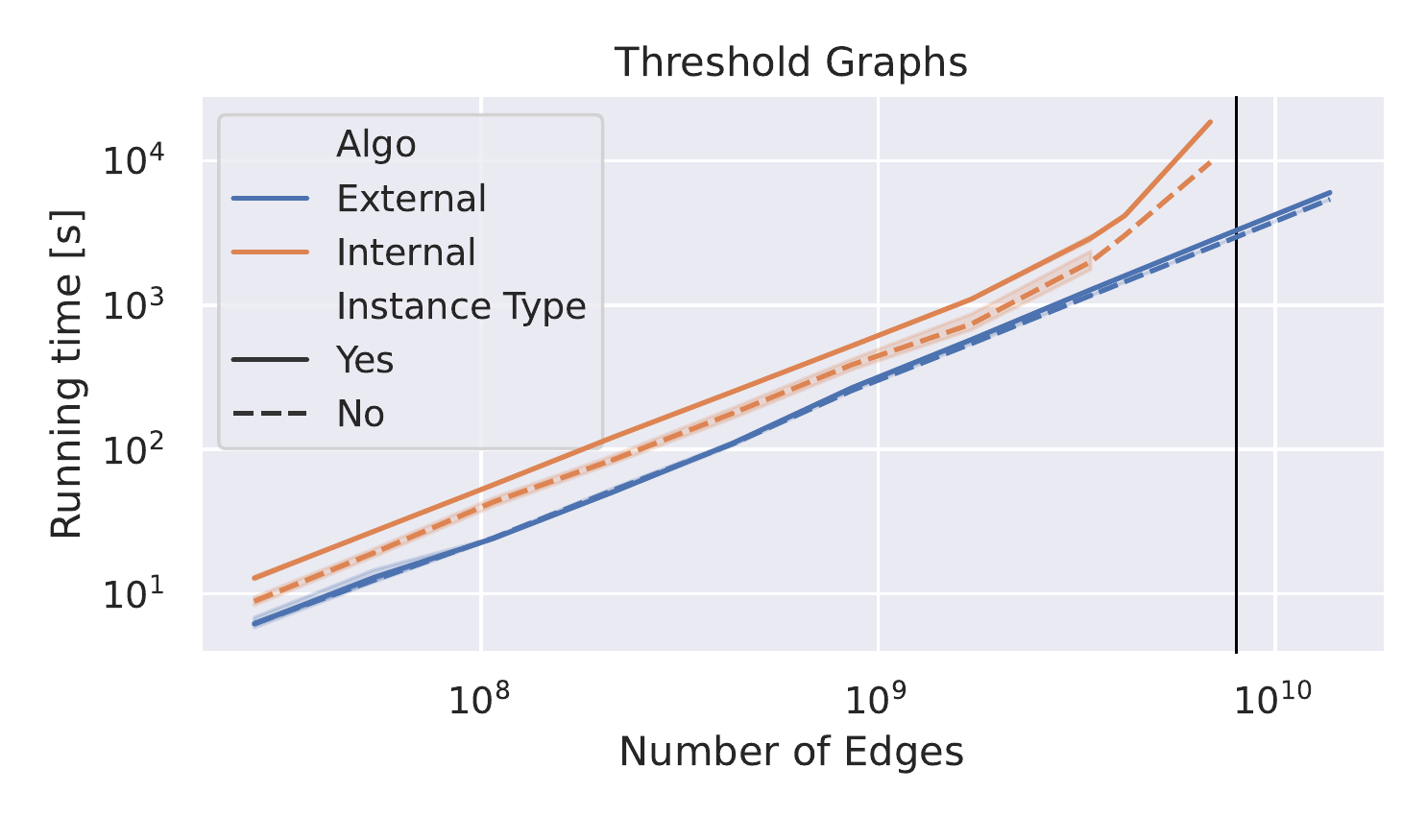}
	\caption{
	Running times of the external memory algorithms for certifying split (left) and threshold graphs (right) for different random graph instances.
	The black vertical lines depicts the number of elements that can concurrently be held in internal memory.
	}
	\label{fig:exp}
\end{figure}

We have implemented our external memory certifying algorithms for split and threshold graphs in \textsf{C++} using the STXXL library~\cite{DKS08}.
To provide a comparison of our algorithms, we also implemented the internal memory state-of-the-art algorithms by Heggernes and Kratsch~\cite{HK07}.
STXXL offers external memory versions of fundamental algorithmic building blocks like scanning, sorting and several data structures.
Our benchmarks are built with \textsf{GNU g++-10.3} and executed on a machine equipped with an AMD EPYC 7302P processor and 64 GB RAM running Ubuntu 20.04 using six 500 GB solid-state disks.

In order to validate the predicted scaling behaviour we generate our instances parameterized by $n$.
For \textsc{yes}-instances of split graphs we generate a split partition $(K, I)$ with $|K| = n/10$ and add each possible edge $\{u, v\}$ with probability $1/4$ for $u \in I$ and $v \in K$.
Analogously, \textsc{yes}-instances of threshold graphs are generated by repeatedly adding either isolated or universal vertices with probability $9/10$ and $1/10$, respectively.
We additionally attempt to generate \textsc{no}-instances by adding $\OO{1}$ many random edges to the \textsc{yes}-instances.
In a last step we randomize the vertex indices to extend the effect of random accesses on the running time of the algorithms.

In \autoref{fig:exp} we present the running times of all algorithms on multiple \textsc{yes}- and \textsc{no}-instances.
It is clear that the performance of both external memory algorithms is not impacted by the main memory barrier while the running time of their internal memory counterparts already increases when at least half the main memory is used.
This effect is amplified immensely after exceeding the size of main memory by only a small fraction for split graphs, \autoref{fig:exp} (left) and we expect the same for threshold graphs.

Certifying the produced \textsc{no}-instances of split graphs seems to require less time than their corresponding unmodified \textsc{yes}-instances as the algorithm typically stops prematurely.
Furthermore, due to the low data locality of the internal memory variant it is apparent that the external memory algorithm is superior for the \textsc{yes}-instances.
The performance on both \textsc{yes}- and \textsc{no}-instances is very similar in external memory.
This is in part due to the fact that the algorithm first performs a relabeling which increases the ratio of common computation significantly.

For threshold graphs, the external memory variant outperforms the internal memory variant due to improved data locality.
Analogously to split graphs, the difference in performance between \textsc{yes}- and \textsc{no}-instances is more profound for the internal memory variants.

\section{Conclusions}
We have presented the first I/O-efficient certifying recognition algorithms for split, threshold, trivially perfect, bipartite and bipartite chain graphs.
Our algorithms require $\OO{\sort(n + m)}$ I/Os matching common lower bounds for many algorithms in external memory.
It would be interesting to further extend the scope of certifying algorithms to more graph classes for the external memory regime.


\bibliographystyle{splncs04}
\bibliography{certi}

\clearpage

\appendix

\begin{section}{Appendix}
\label{sec:appendix}
	
	\subsection{Further Discussion on the Returned Certificates}
	We note that reverting any relabeling again requires only $\OO{\sort(n + m)}$ I/Os by a constant number of scanning and sorting steps.
	Authenticating the \ycert{}s of all our algorithms requires $\OO{\sort(n + m)}$ I/Os anyway which is why we assume that the graph is given in its relabeled form.
	
	\begin{proposition}
	    \label{prop:auth-split}
	    Authenticating $(K, I)$ for a given split graph with $n$ vertices and $m$ edges requires $\OO{\sort(n + m)}$ I/Os.
	\end{proposition}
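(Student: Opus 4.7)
The plan is to reduce authentication to three simple structural checks on the labeled edge list, each realizable by a constant number of sort/scan passes. Concretely I would check (i) that $(K, I)$ is a well-formed partition of $V$, (ii) that $I$ is an independent set, and (iii) that $K$ induces a clique. Steps (i) and (ii) are straightforward, while step (iii) is the only one that needs a little care, since we cannot afford to materialize the $\binom{|K|}{2}$ required edges explicitly.

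First I would verify that $K \cap I = \emptyset$ and $K \cup I = V$. Sorting the concatenated list $K \cup I$ by vertex id and scanning it against the sorted vertex list of $G$ checks both conditions in $\OO{\sort(n)}$ I/Os, and simultaneously yields a labeled vertex stream $(v, \mathrm{lbl}(v))$ with $\mathrm{lbl}(v) \in \{K, I\}$.

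Next I would annotate every edge $\{u, v\} \in E$ with the labels of its two endpoints. This is a standard two-pass relational join in external memory: sort the edges by their first endpoint and merge against the labeled vertex stream to attach $\mathrm{lbl}(u)$, then sort by the second endpoint and merge again to attach $\mathrm{lbl}(v)$. Both passes take $\OO{\sort(n + m)}$ I/Os. A single subsequent scan of the labeled edge list suffices to count the number $m_{II}$ of edges with both endpoints in $I$ and the number $m_{KK}$ of edges with both endpoints in $K$.

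Finally I would verify $m_{II} = 0$ (independent set) and $m_{KK} = \binom{|K|}{2}$ (clique). Here $|K|$ is obtained from the partition check. The clique step is the only conceptually delicate one, because checking all pairs directly would cost $\Theta(|K|^2)$ I/Os, but since $G$ is simple and undirected, counting the edges with both endpoints labeled $K$ and comparing against $\binom{|K|}{2}$ is equivalent to verifying adjacency of every pair. The whole procedure uses $\OO{\sort(n)}$ I/Os for the partition check and $\OO{\sort(n + m)}$ I/Os for the edge-labeling and counting, for an overall bound of $\OO{\sort(n + m)}$ I/Os as claimed.
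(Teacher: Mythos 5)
Your proposal is correct and follows the same high-level recipe as the paper (a constant number of sort/scan passes over the edge list), but the clique check is done differently. The paper leans on the relabeling: with $I=\{v_1,\ldots,v_k\}$ and $K=V\setminus I$, it sorts the edges and verifies positionally that the last $\binom{|K|}{2}$ edges are exactly the clique edges, while independence of $I$ is read off by comparing indices. You instead attach partition labels to both endpoints of every edge via two sort--merge joins and then use a counting argument: $m_{II}=0$ and $m_{KK}=\binom{|K|}{2}$, which for a simple graph is indeed equivalent to $I$ being independent and $K$ being a clique. Your route buys independence from the relabeling assumption (and explicitly validates that $(K,I)$ partitions $V$, which the paper takes for granted), at the cost of one extra sorting pass; the only bookkeeping point to watch is that in the adjacency row representation each undirected edge appears twice, so the count must be taken over, say, pairs with $u<v$ (or compared against $2\binom{|K|}{2}$). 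Both arguments meet the stated $\OO{\sort(n+m)}$ bound.
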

	\begin{proof}
	    Since we can assume relabelled vertex indices, let $I = \{v_1, \ldots,  v_k\}$ and $K = V \setminus I$.
	    After sorting the edges in $\OO{\sort(m)}$ I/Os, we check that no edge between $I$ and $K$ exists by comparing the indices.
	    Verifying that $K$ is a clique only requires looking at the $\binom{|K|}{2}$ last edges where both are done in $\OO{\scan(m)}$ I/Os.
	\end{proof}
	
	\begin{proposition}
	    \label{prop:auth-threshold}
	    Authenticating $\beta = (u_1, \ldots, u_k)$ for a given threshold graph with $n$ vertices and $m$ edges requires $\OO{\sort(n + m)}$ I/Os.
	\end{proposition}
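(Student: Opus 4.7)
The plan is to authenticate $\beta$ in two phases: first certify that the pair formed by $\beta$ and its complement is a valid split partition, and second verify the nesting $N(u_1) \subseteq \ldots \subseteq N(u_k)$ with a single application of time-forward processing.

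Concretely, I would first interpret $\beta$ as specifying the independent set $I = \{u_1, \ldots, u_k\}$ of the split partition, set $K = V \setminus I$, and invoke \autoref{prop:auth-split} to authenticate $(K, I)$. This phase costs $\OO{\sort(n+m)}$ I/Os and, as a useful by-product, leaves the edge list sorted in a form convenient for the second phase.

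In the second phase, since $I$ is an independent set we have $N(u_i) \subseteq K$ for every $i$, so the nno condition $(N(u_i) \setminus I) \subseteq (N(u_{i+1}) \setminus I)$ collapses to the plain chain $N(u_1) \subseteq N(u_2) \subseteq \ldots \subseteq N(u_k)$. To verify this I would scan the adjacency list of each $u_i$ in the order $i = 1, \ldots, k-1$ and, for every $v \in N(u_i)$, push a message $\langle u_{i+1}, v \rangle$ into the external priority-queue. When processing $u_{i+1}$ I would collect its incoming messages and perform a synchronized scan against $N(u_{i+1})$, confirming by a merge that every received $v$ lies in $N(u_{i+1})$; any mismatch immediately rejects $\beta$. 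At most $\OO{m}$ messages are sent, so this TFP phase takes $\OO{\sort(m)}$ I/Os.

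The main obstacle will be arranging the two streams at $u_{i+1}$ so that the subset test is a single linear merge rather than an expensive per-vertex sort. I would address this by fixing the priority-queue ordering to be lexicographic in $(u_{i+1}, v)$ and by keeping the edge array sorted lexicographically in $(u_i, v)$ after the phase-one relabeling; then, at each $u_{i+1}$, both the expected neighbors and the actual adjacency list stream past in matching order and a straightforward merge either confirms containment or exposes a witness to rejection. Combining the two phases yields the claimed $\OO{\sort(n+m)}$ bound.
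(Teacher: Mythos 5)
Your proposal is correct and follows essentially the same route as the paper: authenticate the split partition via \autoref{prop:auth-split}, then reduce the nno check to the consecutive containments $N(u_i) \subseteq N(u_{i+1})$. The only difference is that the paper verifies each containment by a direct concurrent scan over the (consecutively stored, relabeled) adjacency lists in $\OO{\scan(m)}$ total I/Os, whereas you route the same merge through a TFP priority queue at cost $\OO{\sort(m)}$ --- slightly heavier, but still within the claimed $\OO{\sort(n+m)}$ bound.
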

	\begin{proof}
	    We can again assume that the vertices in the ordering of $\beta$ are given by $I = \{v_1, \ldots, v_k\}$.
	    Verifying $(K, I)$ is done as described in \autoref{prop:auth-split} using $\OO{\sort(m)}$ I/Os.
	    It remains to verify that $\beta$ is a nno.
	    For increasing $i$ we verify $N(v_i) \subseteq N(v_{i+1})$ by a concurrent scan over both neighborhoods requiring in total $\OO{\scan(m)}$ I/Os for all $i$.
	\end{proof}
	
	\begin{proposition}
	    Authenticating $\gamma = (v_1, \ldots, v_n)$ for a given trivially perfect graph with $n$ vertices and $m$ edges requires $\OO{\sort(n + m)}$ I/Os.
	\end{proposition}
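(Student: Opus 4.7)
The plan is to reduce authentication to the verification procedure underlying \autoref{algo:uco}. The authenticator receives $G$ together with a purported universal-in-a-component ordering $\gamma = (v_1, \ldots, v_n)$, and must certify trivial perfection by confirming that $\gamma$ actually has the uco property; by the characterization recalled in \autoref{subsec:triviallyperfect}, existence of such an ordering is equivalent to the graph being trivially perfect.

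First I would sanity-check the certificate: sort the entries of $\gamma$ and scan for duplicates or missing vertices in $\OO{\sort(n)}$ I/Os, then derive vertex degrees from a sort-and-scan over $E$ and verify that $\gamma$ lists the vertices in non-increasing degree order in $\OO{\sort(m)}$ I/Os. Next, relabel the edges of $G$ according to $\gamma$ and sort them, again at cost $\OO{\sort(m)}$, matching line~1 of \autoref{algo:uco}.

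The core of the authentication is to rerun the TFP label-propagation routine of \autoref{algo:uco} on the relabeled edges. That routine only uses the adjacency structure and the linear order imposed by $\gamma$; it never exploits the additional fact that the ordering is non-increasing in degree. Consequently, both the correctness argument and the $\OO{\sort(m)}$ I/O bound established in the proof of \autoref{prop:uco} carry over verbatim, and the authenticator accepts $\gamma$ precisely when no label mismatch is detected.

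Summing the phases yields the advertised bound of $\OO{\sort(n + m)}$ I/Os. The main thing that needs arguing, and the only non-routine point, is that correctness of the label-propagation check in \autoref{algo:uco} depends solely on the uco property of the input ordering and not on any structural feature of a non-increasing degree sequence, so the same code serves as a genuine authenticator for an arbitrary candidate uco supplied as a \ycert.
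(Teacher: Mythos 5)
Your proposal is correct and matches the paper's approach: the paper's proof is simply to rerun the uco-verification routine on the supplied ordering (the paper cites \autoref{algo:p4threshold}, which appears to be a typo for \autoref{algo:uco}), and your version spells out the same idea with the additional (harmless, and arguably prudent) sanity checks on the permutation and the degree order. Note that since you do verify the non-increasing degree property before running the label propagation, your final ``non-routine point'' is already covered by \autoref{prop:uco} as stated, so nothing further needs arguing.
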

	\begin{proof}
	    We rerun the \autoref{algo:p4threshold} using $\OO{\sort(n + m)}$ I/Os as the certificate is the ordering itself.
	\end{proof}
	
	\begin{proposition}
	    Authenticating $(U, V\setminus U)$ with two nested neighborhood orderings for a given bipartite chain graph with $n$ vertices and $m$ edges requires $\OO{\sort(n + m)}$ I/Os.
	\end{proposition}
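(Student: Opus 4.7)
The plan is to authenticate the certificate in two independent stages: first verify that $(U, V \setminus U)$ is a valid bipartition of the edge set, and then verify that each of the two provided orderings is indeed a nested neighborhood ordering of its side. As in the earlier authentication propositions, we assume that vertices are relabelled so that the ordering is explicit in the indexing; otherwise a constant number of additional sort/scan passes, costing $\OO{\sort(n + m)}$ I/Os, reduces the task to this case.

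For the bipartiteness check, I would sort the edges once in $\OO{\sort(m)}$ I/Os and then scan the sorted edge list, comparing the partition membership of the two endpoints using a tag that is joined onto each endpoint via a standard sort/scan on vertex ids. Any edge whose endpoints lie on the same side immediately falsifies the certificate. This phase costs $\OO{\sort(n + m)}$ I/Os in total.

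For the nested neighborhood orderings, let the provided ordering of $U$ be $(u_1, \ldots, u_{|U|})$ and analogously for $V \setminus U$. I would reuse the approach of \autoref{prop:auth-threshold}: arrange the adjacency lists so that $N(u_i)$ and $N(u_{i+1})$ can be scanned concurrently, and verify $N(u_i) \subseteq N(u_{i+1})$ by a two-finger merge on these two sorted lists. Summed over all consecutive pairs $i$, each edge is touched in a constant number of such merges, so the total work is $\OO{\scan(m)}$ once the lists are sorted. The same procedure is then applied to the ordering of $V \setminus U$. Combining both sides gives $\OO{\sort(n + m)}$ I/Os overall.

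The only genuine subtlety is ensuring that the concurrent scan across \emph{all} consecutive pairs is realised without loading pairs of adjacency lists independently, which would blow up the I/O cost. This is handled by preparing, with a single sort, a stream in which the neighborhood of $u_i$ appears (at most) twice in consecutive positions — once as the ``smaller'' list of the pair $(u_{i-1}, u_i)$ and once as the ``larger'' list of the pair $(u_i, u_{i+1})$ — so that the whole verification reduces to a single linear scan over a stream of total length $\OO{m}$. Once this layout is in place the argument is entirely routine, and the overall bound of $\OO{\sort(n + m)}$ I/Os follows.
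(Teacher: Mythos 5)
Your proposal is correct and follows essentially the same route as the paper: first check that $U$ and $V \setminus U$ induce no edges (equivalently, are independent sets) as in \autoref{prop:auth-split}, then verify each nested neighborhood ordering by concurrent scans of consecutive neighborhoods as in \autoref{prop:auth-threshold}, for $\OO{\sort(n+m)}$ I/Os overall. Your additional remark on preparing a single duplicated-adjacency stream so that all pairwise merges reduce to one linear scan is a sensible implementation detail that the paper leaves implicit.
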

	\begin{proof}
	    Similar to \autoref{prop:auth-split} we check that $U$ and $V \setminus U$ are both independent sets using $\OO{\sort(n + m)}$ I/Os.
	    Thereafter similar to \autoref{prop:auth-threshold} we verify that both orderings are indeed nested neighborhood orderings using again $\OO{\sort(n + m)}$ I/Os.
	\end{proof}
\end{section}

\end{document}